%%
%% Copyright 2007, 2008, 2009 Elsevier Ltd
%%
%% This file is part of the 'Elsarticle Bundle'.
%% ---------------------------------------------
%%
%% It may be distributed under the conditions of the LaTeX Project Public
%% License, either version 1.2 of this license or (at your option) any
%% later version.  The latest version of this license is in
%%    http://www.latex-project.org/lppl.txt
%% and version 1.2 or later is part of all distributions of LaTeX
%% version 1999/12/01 or later.
%%
%% The list of all files belonging to the 'Elsarticle Bundle' is
%% given in the file `manifest.txt'.
%%

%% Template article for Elsevier's document class `elsarticle'
%% with numbered style bibliographic references
%% SP 2008/03/01
%%
%%
%%
%% $Id: elsarticle-template-num.tex 4 2009-10-24 08:22:58Z rishi $
%%
%%
\documentclass[]{article}

%% Use the option review to obtain double line spacing
%% \documentclass[preprint,review,12pt]{elsarticle}

%% Use the options 1p,twocolumn; 3p; 3p,twocolumn; 5p; or 5p,twocolumn
%% for a journal layout:
%% \documentclass[final,1p,times]{elsarticle}
%% \documentclass[final,1p,times,twocolumn]{elsarticle}
%% \documentclass[final,3p,times]{elsarticle}
%% \documentclass[final,3p,times,twocolumn]{elsarticle}
%% \documentclass[final,5p,times]{elsarticle}
%% \documentclass[final,5p,times,twocolumn]{elsarticle}

%% if you use PostScript figures in your article
%% use the graphics package for simple commands
%% \usepackage{graphics}
%% or use the graphicx package for more complicated commands
\usepackage{graphicx}
\usepackage{subfigure}
%% or use the epsfig package if you prefer to use the old commands
%% \usepackage{epsfig}

%% The amssymb package provides various useful mathematical symbols
\usepackage{amssymb}

%% The amsthm package provides extended theorem environments
\usepackage{amsthm}
\usepackage{mathrsfs}
\usepackage{amsmath}

\usepackage{array}
\usepackage{booktabs}

\newtheorem{theorem}{Theorem}[section]
\newtheorem{lemma}[theorem]{Lemma}
\newtheorem{proposition}[theorem]{Proposition}
\newtheorem{corollary}[theorem]{Corollary}

\newcommand{\tabincell}[2]{\begin{tabular}{@{}#1@{}}#2\end{tabular}}

%% The lineno packages adds line numbers. Start line numbering with
%% \begin{linenumbers}, end it with \end{linenumbers}. Or switch it on
%% for the whole article with \linenumbers after \end{frontmatter}.
%% \usepackage{lineno}

%% natbib.sty is loaded by default. However, natbib options can be
%% provided with \biboptions{...} command. Following options are
%% valid:

%%   round  -  round parentheses are used (default)
%%   square -  square brackets are used   [option]
%%   curly  -  curly braces are used      {option}
%%   angle  -  angle brackets are used    <option>
%%   semicolon  -  multiple citations separated by semi-colon
%%   colon  - same as semicolon, an earlier confusion
%%   comma  -  separated by comma
%%   numbers-  selects numerical citations
%%   super  -  numerical citations as superscripts
%%   sort   -  sorts multiple citations according to order in ref. list
%%   sort&compress   -  like sort, but also compresses numerical citations
%%   compress - compresses without sorting
%%
%% \biboptions{comma,round}

% \biboptions{}
\title{Routing for Energy Minimization with Discrete Cost Functions}

%\journal{Theoretical Computer Science}

\begin{document}

%\begin{frontmatter}

%% Title, authors and addresses

%% use the tnoteref command within \title for footnotes;
%% use the tnotetext command for the associated footnote;
%% use the fnref command within \author or \address for footnotes;
%% use the fntext command for the associated footnote;
%% use the corref command within \author for corresponding author footnotes;
%% use the cortext command for the associated footnote;
%% use the ead command for the email address,
%% and the form \ead[url] for the home page:
%%
%\title{Energy-Efficient Network Routing with \\ Discrete Cost Functions \tnoteref{label}}
%\title{Energy-Efficient Network Routing with \\ Discrete Cost Functions}
%%\tnotetext[label0]{This work is }
%\author{Lin Wang\fnref{label1}\fnref{label3}}
%\author{Antonio Fern\'andez Anta\fnref{label2}}
%\author{Fa Zhang\fnref{label1}}
%\author{Chenying Hou\fnref{label1}\fnref{label3}}
%\author{Zhiyong Liu\fnref{label4}}
%%
%\fntext[label1]{Center for Advanced Computing Research and Key Lab of Intelligent Information Processing, Institute of Computing Technology, Chinese Academy of Sciences}
%\fntext[label2]{Institute IMDEA Networks}
%\fntext[label3]{Graduate University of Chinese Academy of Sciences}
%\fntext[label4]{China State Key Lab for Computer Architecture, Institute of Computing Technology, Chinese Academy of Sciences}
%% \ead{email address}
%% \ead[url]{home page}
%% \fntext[label2]{}
%% \cortext[cor1]{}
%% \address{Address\fnref{label3}}
%% \fntext[label3]{}

\author{\small Lin Wang$^{1,3}$, Antonio Fern\'andez Anta$^2$, Fa Zhang$^1$, Chenying Hou$^{1,3}$, Zhiyong Liu$^{1,4}$\\
\small $^1$ Institute of Computing Technology, Chinese Academy of Sciences \\
\small $^2$ Institute IMDEA Networks\\
\small $^3$ University of Chinese Academy of Sciences\\
\small $^4$ China Key Lab for Computer Architecture, ICT, CAS}
\date{}

\maketitle

%% use optional labels to link authors explicitly to addresses:
%\author[label1,label2]{Lin Wang}
%\author[label3]{Antonio Fern\'andez Anta}
%\author[label1]{Fa Zhang}
%\author[label1,label2]{Chenying Hou}
%\author[label4]{Zhiyong Liu}

%\address[label1]{Center for Advanced Computing Research and Key Lab of Intelligent Information Processing, \\ Institute of Computing Technology, Chinese Academy of Sciences, Beijing, China}
%\address[label2]{Graduate University of Chinese Academy of Sciences, Beijing, China}
%\address[label3]{Institute IMDEA Networks, Madrid, Spain}
%\address[label4]{China State Key Lab for Computer Architecture, Institute of Computing Technology, Chinese Academy of Sciences, Beijing, China}

%\author{}

%\address{}

\begin{abstract}
%% Text of abstract
%% 
Energy saving is becoming an important issue in the design and use of computer networks. In this work we propose a problem that
considers the use of rate adaptation as the energy saving strategy in networks. The problem is modeled as an integral demand-routing 
problem in a network with discrete cost functions at the links. The discreteness of the cost function comes from the different states (bandwidths) at which links can operate and, in particular, from the energy consumed at each state. This in its turn leads to the non-convexity of the cost function, and thus adds complexity to solve this problem. We formulate this routing problem as an integer program, and we show that the general case of this problem is NP-hard, and even hard to approximate. For the special case when the step ratio of the cost function is bounded, we show that effective approximations can be obtained. Our main algorithm executes two processes in sequence: relaxation and rounding. The relaxation process eliminates the non-convexity of the cost function, so that the problem is transformed into a fractional convex program solvable in polynomial time. 
After that, a randomized rounding process is used to get a feasible solution for the original problem. This algorithm provides a constant approximation ratio for uniform demands and an approximation ratio of $O(\log^{\beta-1} d)$ for non-uniform demands, where $\beta$ is a constant and $d$ is the largest demand. \\
\textbf{Keywords: }energy saving, network optimization, network routing, approximation

\end{abstract}

%\begin{keyword}
%% keywords here, in the form: keyword \sep keyword
%Energy saving, network optimization, network routing, approximation
%% MSC codes here, in the form: \MSC code \sep code
%% or \MSC[2008] code \sep code (2000 is the default)

%\end{keyword}

%\end{frontmatter}

%%
%% Start line numbering here if you want
%%
% \linenumbers

%% main text
\section{Introduction}
\label{sec:intro}

Energy-aware computing has recently become a hot research topic. The increasingly widespread use of Internet and the sprouting of data centers are having a dramatic impact on the global energy consumption. The energy consumed comes from the aggregate power used by many devices (CPUs, hubs, switches, routers). As shown in \cite{Greenberg_Hamilton-2009}, up to $45\%$ of the total energy is consumed by computing resources such as CPUs and storage systems. There has been significant amount of work done on energy management schemes for these largest energy consumers, which was started by the work on power consumption minimization in processors done by Yao et al. \cite{Yao_Demers-1995}. In parallel with energy saving in computing devices, the energy consumption of network devices has also been a fundamental concern in wired networks like Internet or data center networks. Gupta and Singh \cite{Gupta_Singh-2003} were among the first to identify energy consumption of general networks as an important issue. Kurp et al. \cite{Kurp-2008} then showed that there is significant room for saving energy in current networks in general. The main reason is that networks are always designed with a significant level of redundancy and over-provisioning, in order to guarantee QoS, and to tolerate peak load and traffic variations. However, since networks usually carry traffic that is only a small fraction of the peak, a significant portion of the energy consumed is wasted. This is particularly representative in data center networks, where the connectivity redundancy is very heavy and the traffic load is dramatically fluctuant with certain patterns over time. Ideally, the energy consumed in a network should be proportional to the traffic load it carries.

In this paper we consider a centralized energy saving problem for networks. In principle, this problem has applications in most networks, and can particularly be used in data center networks or other software-defined networks. In the model assumed, we use an energy saving strategy called \emph{rate adaptation} , which 
assumes that the network links have a discrete set of operational states (bandwidths), each with its corresponding power consumption. 
By using the rate adaptation strategy, network devices choose an appropriate speed according to their current traffic load. 
This model is different from the usual speed scaling model (see below) that assumes a continuous power function, and is more realistic. However, the optimization problem we need to solve in a model with rate adaptation is more complex compared with using these other models. Gunaratne et al. \cite{Gunaratne_Christensen-2008} proposed a network model in which links use rate adaptation, that they called adaptive link rate (ALR). We will present a formal description of our network model, that uses rate adaptation as power saving strategy, and of the problem of providing route assignment to demands from a global view of the network. Actually, this model turns out to be a network routing problem with a discrete cost function. We show by analysis that in the general case, this problem is hard to approximate. Given some restriction on the cost function, we provide an efficient approximation algorithm for it.

\subsection{Related Work}

Work on energy efficiency have mostly focused on two strategies to save energy: speed scaling and powering down. Under speed scaling, it is assumed that the power consumed by a device working at speed $s$ has the form of $P=s^{\beta}$, where $\beta$ is a constant between one and three. This comes from the well known cube-root rule, which states that the speed is approximately the cube root of the power consumed. Speed scaling  has been widely used for optimizing the energy consumption of computing resources, especially the processor (see \cite{Yao_Demers-1995, Chan_Chan-2007,Bansal_Kimbrel-2007,Bansal_Kimbrel-2009}). This strategy has also been used for network devices. However, most studies in this category focused on a single device. At the same time, speed scaling as described does not model realistically network devices and is hard to be applied in practice. Another approach to save energy is achieved by powering down the devices while they are idle. Andrews et al. \cite{Andrews_Fernandez-PD-2010} considered that network elements operate only in the full-rate active mode or the zero-rate sleeping mode. Nedevschi et al. explored both speed scaling and power down to reduce global network energy consumption. \emph{ElasticTree} \cite{Heller_Seetharaman-2010} is a centralized energy controlling system for data center networks which turns off some of the routers or switches and then yields energy-efficient routes.

The general network routing problem is described as follows. We are given a set of traffic demands and want to unsplitably route them over a transmission network. The total traffic $x_{e}$ on link $e$ incurs a cost which is defined by a cost function $f_{e}(x_{e})$. Our objective is to find routes for all demands such that the total incurred cost $\sum_{e}f_{e}(x_{e})$ is minimized.
There has been significant work on this general network routing problem which is also called the minimum-cost multi-commodity flow (MCMCF) problem. Note that the complexity of this problem depends on the cost function defined on each edge. In the simplest case where the cost function $f_e(x_e)$ is linear with the carried traffic $x_e$, this problem turns out to be the shortest path problem which is polynomial-time solvable. Moreover, if we choose $f_{e}(x_e)$ as a subadditive continuous function which has the property of \textit{economies of scale}, the problem becomes the Buy-at-Bulk (BAB) problem which has been well studied. Awerbuch and Azar \cite{Awerbuch_Azar-1997} provided an $O(\log^{2}n)$ randomized approximation algorithm for this problem. Andrews \cite{Andrews-2004} showed that for any constant $\gamma > 0$, there is no $O(\log^{\frac{1}{4}-\gamma}n)$-approximation algorithm for uniform BAB (all edges have a uniform cost function), and there is no $O(\log^{\frac{1}{2}-\gamma}n)$-approximation algorithm for the non-uniform version, unless $\mbox{NP}\in\mbox{ZPTIME}(n^{polylog\; n})$. Here $n$ is the size of the network. However, for general cases with superadditive continuous cost functions, achieving any finite ratio approximation is NP-hard \cite{Andrews_Fernandez-SS-2010}. 
Andrews et al.~\cite{Andrews_Fernandez-SS-2010} also studied an energy-aware scheduling and routing problem for wired networks under the speed scaling model,
presenting a constant approximation algorithm for uniform demands. 
In \cite{Andrews_Antonakopoulos-2010,Andrews_Antonakopoulos-2011} the authors studied routing under a special case of speed scaling with a startup cost, and presented a polylogarithmic approximation algorithm. The most significant difference between the work mentioned above and our work is that here we focus on the network routing problem with discrete cost functions rather than continuous ones.

\subsection{Our Results}

Based on the general network routing problem, our problem can be described as follows. We want to route a set of traffic demands over a transmission network. Each link (which abstracts both the transmission link and the corresponding ports of the end devices connected by this link) can operate at one of a given set of speed states $R_1< \cdots < R_m$. For each speed state $R_i$, a fixed amount of cost $E_i$ is defined to represent the power consumption at this speed. To carry all traffic flows, each link $e$ chooses a speed state $R_i$ at which to operate so that $x_e \in (R_{i-1},R_i]$. If $x_e \le R_1$, then $R_1$ is chosen. We also assume that $x_e \le R_m$ for any routing considered. Consequently, each link will incur a cost and the total cost of the network can be defined as $\sum_{e} E^e_i$, where $E^e_i$ is the cost of link $e$ while choosing a speed $R_i$. Equally we can use a cost function $f_e(x_e)$ to describe the cost on each edge. This cost function can be obtained by setting the value of $f_e(x_e)$ to $E_i$ if $x_e \in (R_{i-1}, R_i]$. Thus the total cost can be represented as $\sum_e f_e(x_e)$. It is trivial that function $f_{e}(x_{e})$ has a discrete formulation and is indeed a step function. For simplicity, we focus on the uniform case, where all the links consume energy in the same way and thus have the same cost function $f(\cdot)$. 

We aim to solve the minimum-energy routing with this discrete cost function in this paper. In Section~\ref{sec:model}, we give the formalized description of the model and prove that in general we cannot get any finite ratio polynomial-time approximation algorithm for it if P$\neq$NP. In Section~\ref{sec:approx}, we consider a special case where $f(1)$ and the ratio between any two adjacent steps of the cost function $f(\cdot)$ is bounded by a constant. Given this assumption, we show that efficient approximation algorithms can be achieved. The approximation we provide in this paper is established on a two-step relaxation and rounding process. The first relaxation is made on the cost function to eliminate the discreteness, which is done by a special interpolation method which transforms the discrete function $f(\cdot)$ into a continuous one $g(\cdot)$ while introducing a bounded error. Then, by relaxing the integral indicator variable, the induced problem turns out to be a convex program and can be solved in polynomial time. Hereafter, a two-step rounding technique is used to round infeasible solutions to the feasible region. We show by analysis in Section~\ref{sec:analysis} that our approach can achieve a constant approximation for the cases with uniform demands and an $O(\log^{\beta-1}d)$-approximation to the cases with non-uniform demands, where $d$ is the maximum one among all traffic demands. 
Finally in Section~\ref{sec:con}, we draw conclusions and provide some future work and open problems derived from this work.

\section{The Model}
\label{sec:model}

We study the energy-efficient routing problem under a general case where the optimizer takes a traffic matrix and a network topology as inputs and returns the network configuration. We first consider a simple case with unit demands where all the demands have a flow amount of one integer unit, and then extend the results to the uniform and non-uniform cases, where we consider that all the traffic demands have the same flow amount of some integer units and have different flow amounts, respectively.

The problem is described as follows. Assume we are given an undirected graph $G=(V,E)$ and a set of traffic demands $D=\{d_1, d_2, ..., d_k\}$, where the $i^{th}$ demand, $1 \le i \le k$, requests $d_i$ units of bandwidth provisioned between a source node $s_i$ and a sink node $t_i$. Unless otherwise stated, in the following we assume unit demands, i.e., $d_i=1$. We also assume that links represent abstract resources (links with their switch ports), and each link can operate at one of a constant number of different rates $R_1 < R_2 < ... < R_m$. Here we assume that $R_1 \ge 1$ because a link will carry a flow of at least one unit. For energy-saving consideration as we mentioned in the previous section, it is reasonable to have many different rates in future network devices. Each rate $R_i$, $1 \le i \le m$, has a cost of $f_e(R_i)$ which is the power consumption. Our goal is to route every demand in an unsplittable fashion with the objective of minimizing the total cost $\sum_e f_e(R_e)$, where $R_e \in \{R_i | i \in [1, ..., m]\} $. Note that unsplittable routing is important in networks in order to avoid packet reordering. We only consider the case in which all links have the same cost function $f(\cdot)$. We call this energy-efficient routing problem with discrete cost functions as \emph{rate adaptive energy-efficient routing problem}.

Not surprisingly, the rate adaptive energy-efficient routing problem is NP-hard, due to the fact that the cost function is discrete. Furthermore, we show that, in general, it cannot even be approximated. This is shown by the following theorem.

\begin{theorem}
\label{thm:inapprox}
For any constant $\rho \ge 1$, there is no polynomial-time $\rho$-approximation algorithm for the rate adaptive energy-efficient routing problem, unless P=NP.
\end{theorem}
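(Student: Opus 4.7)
The plan is to give a gap-introducing reduction from the undirected Edge-Disjoint Paths problem (EDP), which is well known to be NP-hard when the number of source-sink pairs is part of the input. Given the constant $\rho \ge 1$ for which we wish to rule out a polynomial-time $\rho$-approximation, the idea is to tailor a two-state cost function so that EDP-feasible instances correspond to routing instances of cost at most $|E(H)|$, while EDP-infeasible instances force the routing cost strictly above $\rho |E(H)|$. This multiplicative separation precludes any constant-factor approximation unless P=NP.

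Concretely, from an EDP instance $(H,\{(s_i,t_i)\}_{i=1}^{k})$ I build an instance of the rate adaptive energy-efficient routing problem on $G=H$ with the same $k$ unit demands. I use only two operating rates, $R_1=1$ and $R_2=k$, and set $f(R_1)=1$ together with $f(R_2)=M$, where $M := \lceil \rho\,|E(H)|\rceil + 1$. Since demands are unit and $R_1=1$, an edge carrying exactly one demand pays $1$; as soon as two or more demands share an edge, that edge must switch to rate $R_2$ and pay $M$. The construction is polynomial in $|H|$ and $\rho$, and the standing assumption $x_e \le R_m$ is respected because $R_2 = k$ suffices even if all $k$ demands were routed through a single edge.

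For the gap analysis, in the \emph{yes} direction, routing along $k$ pairwise edge-disjoint paths makes every used edge operate at $R_1$, so the optimum is at most $|E(H)| < M$. In the \emph{no} direction, any unsplittable routing of the $k$ demands must place at least two units on some common edge, and that single edge alone contributes at least $M > \rho\,|E(H)|$. A polynomial-time $\rho$-approximation could therefore decide EDP by thresholding the returned cost at $\rho\,|E(H)|$, contradicting $\mathrm{P}\ne\mathrm{NP}$. The step requiring the most care is ruling out that a $\rho$-approximator opportunistically exploits rate $R_2$ to shorten some paths: this is defused by choosing $M$ strictly greater than $\rho\,|E(H)|$, so even a single edge upgraded to $R_2$ already exceeds $\rho$ times the cheapest all-$R_1$ cost, forcing the approximator to stay within $R_1$ whenever that is feasible. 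Since the theorem places no restriction on the shape of $f$ (the bounded step-ratio assumption appears only later in Section~\ref{sec:approx}), we are free to scale the jump from $f(R_1)$ to $f(R_2)$ with $\rho$ and $|E(H)|$, which is exactly what the gap argument needs.
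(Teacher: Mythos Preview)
Your proof is correct and follows essentially the same gap-introducing reduction from Edge-Disjoint Paths as the paper: build a two-rate instance with $f(R_2)$ large enough that a single shared edge already exceeds $\rho$ times the all-$R_1$ cost, so a $\rho$-approximation would decide EDP. The only cosmetic difference is that the paper leaves $R_1$ and $f(R_1)$ abstract (setting $d_j=R_1$ and requiring $f(R_2) > \rho\cdot w\cdot f(R_1)$), whereas you instantiate $R_1=1$, $f(R_1)=1$, and $M=\lceil \rho\,|E(H)|\rceil+1$; the arguments are otherwise identical.
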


\begin{proof}

We show here that a polynomial-time $\rho$-approximation algorithm $\mathscr{A}$ could be used to
solve the edge-disjoint paths (EDP) problem, which is NP-hard, proving the claim. Let us assume the algorithm $\mathscr{A}$ exists.
Let us consider an instance of the EDP problem, given as a network with $w$ links and a collection of $k$ pairs source-sink, $(s_j,t_j)$.
We transform it into an instance of the rate adaptive energy-efficient routing problem as follows. In this instance, the network, the number of demands, and the
source $s_j$ and sink $t_j$ of each demand $j$ are the same as in the EDP instance. Also, there are only two rates $R_1$ and $R_2$, $R_2 \geq k \cdot R_1$,
and we choose any cost function such that $f(R_2) > \rho \cdot w \cdot f(R_1)$. Finally, we set each demand value $d_ j=R_1$. This transformation is
trivially polynomial in time.

We use the algorithm $\mathscr{A}$ to find a solution that $\rho$-approximates the optimal solution of the rate adaptive energy-efficient routing problem
obtained by the transformation. (Observe that since $R_2 \geq k \cdot R_1$, a solution always exist.)
Let $\hat{C}$ be the cost of the solution found. We claim that $\hat{C} \leq \rho \cdot w \cdot f(R_1)$ if and only if
there are edge disjoint paths for all the demands. If this claim holds, then $\mathscr{A}$ can be used to solve EDP in polynomial time and hence $P=NP$.

The claim can be proven as follows. 
($\Rightarrow$) Assume that $\hat{C} \leq \rho \cdot w \cdot f(R_1)$. Then, the paths found by $\mathscr{A}$ are edge disjoint. Otherwise, at least one of the links would be used by at least two demands, the rate of that link would be $R_2$, and $\hat{C} \geq f(R_2) > \rho \cdot w \cdot f(R_1)$.
($\Leftarrow$)
Assume edge disjoint paths exist for all demands. Then, there is a way to route all the demands so that
each link is used by no more than one demand, and can be set to rate $R_1$. The cost of every edge in this solution is at most $f(R_1)$, and the cost of the solution is hence $C \leq w \cdot f(R_1)$. Trivially, the optimal solution also has cost $C^* \leq w \cdot f(R_1)$.
Since $\mathscr{A}$ is a $\rho$-approximation algorithm, the cost of the solution it outputs must satisfy 
$\hat{C} \leq \rho \cdot C^* \leq \rho \cdot w \cdot f(R_1)$.
\end{proof}

Actually Lemma 2 in \cite{Andrews_Fernandez-SS-2010} is a special case of this theorem, in which in the proof $R_1=1$ and $f(R_1)=0$. 
The proof of the theorem suggests that we cannot get any finite-ratio polynomial-time approximation algorithm for the rate adaptive energy-efficient routing problem unless we bound the ratio of costs of different rates $f(R_i)/f(R_{i-1})$ and $f(1)/\mu$ (we call it as step ratio). In the rest of this work, we will consider that this ratio is bounded by a constant $\sigma$. Note that this is quite sensible because the energy consumption of devices is always bounded. But, even in this special case this problem is NP-hard. The good news are that the problem with constant step ratio can be approximated by a constant approximation ratio. We will show the details in the following sections.

Formally, we formulate the described routing problem into an integer program, that can be seen as $(P_{1})$ below. The binary variable $y_{i,e}$ indicates whether demand $i$ uses link $e$, while $x_{e}$ is the total load on $e$. Flow conservation means that for each demand $i$ the source $s_{i}$ generates one unit of flow, the sink absorbs one unit of flow, and for the other vertices the incoming and outgoing flows of demand $i$ are the same. Observe that for $x_{e}\leq z_{e}$, $f(x_{e})=f(z_{e})$. This results in the discrete property of the cost function $f(\cdot)$. More precisely, $f(x)$ is a non-decreasing step function of $x$, where $x$ is the speed of each link. In practice, cost functions for network resources might different, but here we focus on the case with only a uniform cost function. There is no doubt that solving $(P_1)$ is NP-hard for the $0-1$ constraint on variable $y_{i,e}$. Since solving our network routing problem is NP-hard, we have no hope on finding the optimal solution.

\begin{align*}
 & (P_{1})\;\; \min\;\;\sum_{e}f(z_{e})\\
\mbox{subject to}\\
 & x_{e}=\sum_{i}y_{i,e} & \forall e\\
 & x_{e}\leq z_{e} & \forall e\\
 & z_{e}\in\lbrace R_{1},...,R_{m}\rbrace & \forall e\\
 & y_{i,e}\in\lbrace0,1\rbrace & \forall i,e\\
 & y_{i,e}:\;\; \mbox{flow conservation}
\end{align*}

\section{Approximation}
\label{sec:approx}

In this section, we present our main algorithm, which is devised to approximate the optimal solution of $(P_1)$. As indicated above, the complexity of the rate adaptive energy-efficient routing problem comes from the non-convexity of the cost function $f(\cdot)$ and the integral constraint on $y_{i,e}$. Intuitively, if we remove these constraints and transform function $f(\cdot)$ into a convex one, the induced problem could be solved. From this intuition, we propose an approximation algorithm for this problem. By a two-step relaxation and rounding procedure, the algorithm achieves a constant approximation.

In order to get an approximated solution of $(P_1)$, a two-step relaxation and rounding procedure is introduced. The first relaxation is made on the cost function $f(\cdot)$, to remove the discreteness of $f(\cdot)$. We use a particular interpolation method to transform the discrete cost function of the original program into a continuous convex one. It makes the program to be simpler while introducing a bounded error. Additionally, we relax the binary variable $y_{i,e}$ to be real. Then, the induced problem is a convex program which has been proved to be optimally solvable in polynomial time. Now we describe our algorithm in detail.

\subsection{Relaxing the Cost Function}

The first relaxation is made on the cost function. We use a special interpolation method to simplify ($P_1$) by replacing the step function $f(\cdot)$ with a continuous function $g(\cdot)$. 

Before applying the interpolation, it is important to decide the form of the function $g(\cdot)$ which we aim to obtain. Due to the well known cube-root rule, it has been suggested that most network devices consume energy in a superadditive manner (\cite{Andrews_Fernandez-SS-2010, Leon_Navarro-2011}). That is, doubling the speed more than doubles the energy consumption. In particular the energy curve is often modeled by a polynomial function $g(x)=\mu x^\beta$, where $\mu$ and $\beta$ are constants associated to the network elements. More precisely, the parameter $\beta$ in the ordinary form of energy consumption has been usually assumed to be in the interval $(1,3]$ \cite{Brooks_Bose-2000}. The objective here is to transform a step cost function $f(\cdot)$ into a function in the form of $g(x)=\mu x^\beta$. Although, as mentioned, $\beta$ will be typically larger than $1$, and hence $g(\cdot)$ will be a convex function, the proposed interpolation method does not impose any restriction on this.

Now we discuss how to apply the transformation from a step function to a continuous (convex) one. A common approach is to use the midpoints of all the steps as discrete values and fit by mean squares. This approach will not give good results if the steps of the function have different lengths, because there might be many steps with small lengths dominating the interpolation. Another popular method is to perform an interpolation on a set of points which is obtained by sampling the original function. Unfortunately, using this technique the error of the interpolation depends on the sampling method we choose, and is always hard to estimate. In order to contain the interpolation error, we devise a new interpolation method which based on minimizing the difference between the two function. Without depending on some other parameters, the method works well especially for the fitting of step functions.

Consider the original function $f(x)$, and the one to be fitted $g(x)$, as previously mentioned. $f(x)$ is defined as follows.

\begin{equation}
\label{eq:f_def}
f(x)=
\left\lbrace
\begin{aligned}
	& y_1, & 1 \le x \le R_1, \\
	& y_2, & R_1 < x \le R_2, \\
	& ... \\
	& y_m, & R_{m-1} < x \le R_m,
\end{aligned}
\right.
\end{equation}
where in our case $y_i=f(R_i)$, $(1 \le i \le m)$ is the cost of each state and $R_i$, $R_{i+1}$ $(1 \le i < m)$ represent the lower and upper boundaries of the speed for each state. We aim to fit $g(x)$ to $f(x)$.

The formula that has to be minimized can be represented as

\begin{equation}
\label{org_formula}
G(\mu, \beta) =  \max_{x \in [1, R_m]} \left\lbrace \frac{f(x)}{g(x)}, \frac{g(x)}{f(x)}\right\rbrace.
\end{equation}
We only consider $x \in [1, R_m]$ because $f(0) = g(0)$ and in any feasible integral solution, $x \not \in (0,1)$. Since $g(x)$ is not a linear function, this minimization problem is hard to solve. We then consider an alternative. We use the fact that the formula becomes linear with parameters $\mu$ and $\beta$ when a logarithmic transformation is applied. The fitting function $g(x)$ is clearly linear under a logarithmic transformation. Note that

\begin{equation}
\log (g(x)) = \log \mu + \beta \log x.
\end{equation}
Then it is equivalent if we minimize the following formula
\begin{equation}
\label{log_formula}
\begin{aligned}
	G'(\mu, \beta) = &  \max_{x \in [1, R_m]} |\log f(x) - \log g(x)| \\
					= & \max_{x \in [1, R_m]} |\log f(x) - (\log \mu + \beta \log x)|.
\end{aligned}
\end{equation}
However, the above formula of $G'$ is still unable to tackle, because the absolute operation is hard to handle. Here we propose to use an alternative which is to minimize the integral of the square of the difference between the two functions. Let us define $v_i=\log y_i$, $w_i=\log R_i$ for $i \in [1,m]$ and $w_0 = \log 1 = 0$, and $\mu'=\log \mu$. Then the alternative formula that we will in fact use is

\begin{equation}
\label{eq:log_integral}
H(\mu', \beta) = \sum_{i=1}^m \int_{w_{i-1}}^{w_{i}} [v_i - (\mu' + \beta w)]^2 dw.
\end{equation}
And now Eq.~(\ref{eq:log_integral}) is to be minimized with respect to the parameters of the general quadratic equation. We choose $\mu'$ and $\beta$ such that the first partial derivatives of $H(\mu', \beta)$ are equal to zero and its second derivatives are positive.

\begin{equation}
\label{eq:derivative}
\left\lbrace
\begin{aligned}
	& \frac{\partial H}{\partial \mu'} = \sum_{i=1}^m \int_{w_{i-1}}^{w_{i}} -2[v_i-\mu' - \beta w]dw = 0, \\
	& \frac{\partial H}{\partial \beta} = \sum_{i=1}^m \int_{w_{i-1}}^{w_{i}} -2w[v_i - \mu' - \beta w]dw = 0.
\end{aligned}
\right.
\end{equation}
It is obvious that the second derivatives are all positive. By solving Eq.~(\ref{eq:derivative}), we get the values of parameters $\mu'$ and $\beta$, and from $\mu'$ we obtain $\mu$. Consequently, the objective function $g(x)$ of the interpolation can be determined.
Once the function $g(x)$ is obtained, the optimization problem can be rewritten as follows.
\begin{align*}
 & (P_{2})\;\; \min\;\;\sum_{e}g(x_{e})\\
\mbox{subject to}\\
 & x_{e}=\sum_{i}y_{i,e}& \forall e\\
 & y_{i,e}\in\lbrace0,1\rbrace & \forall i,e\\
 & y_{i,e}:\;\; \mbox{flow conservation}
\end{align*}
The problem now turns to be an integer program with a convex objective function. Solving $(P_{2})$ is still NP-hard.

\subsection{Relaxing the Binary Constraint}

In this section, we show that by relaxing the binary constraint on $y_{i,e}$, $P_{2}$ can be efficiently solved. Although this brings some accuracy loss, it makes the problem solvable. It is feasible to do this kind of relaxation in our case, with some small modifications. One key observation is the convexity of the objective function in $P_{2}$. By combining the linear constraints and relaxing the binary variable $y_{i,e}$ from $\{ 0,1 \}$ to $[0,1]$, $P_{2}$ turns to be a regular convex program and can be solved in polynomial time by using algorithms for convex programming. We denote the fractional solution obtained by convex programming as $y_{i,e}^*$. However the problem is that in the fractional solution $y_{i,e}^*$, the traffic flow of a demand can be splitted over multiple paths, which is not feasible in our case. We address this below.

\subsection{Two-step Rounding}

In this section, we introduce a two-step rounding technique to transform the fractional solution into a feasible one. First we round the fractional routing solution to an integral one, and then we determine the link rates from the routes.

For the routing path rounding, the key point is how to transform the fractional routes $y_{i,e}^*$ into a solution in which there is only one path for each traffic flow. In the fractional solution, the flow $i$ carried by link $e$ is represented as $y_{i,e}^*$, where $y_{i,e}^* \in [0,1]$. Our goal is to get a path $p_i$ for each demand $i$ which satisfies $\{ e | e\in p_i \} \subseteq \{ e | y_{i,e}^* > 0 \}$, to route flow $i$. We use the Raghavan-Thompson randomized rounding method to complete this transformation.

The overall rounding process is described below. Once the optimal fractional solution $y_{i,e}^*$ has been found, the flows assigned to the links is mapped to paths as follows. For each demand $i$, first we generate a sub-graph $G_{i}$ defined by links $e$ where $y_{i,e}^* > 0$. Then, we extract a simple path $p$ connecting the source and destination nodes. The link $e \in p$ with the smallest weight $y_{i,e}^*$ is called the bottleneck link. This $y_{i,e}^*$ is selected as the weight of this path, denoted as $w_{p}$. Hereafter the weight $y_{i,e}^*$ of each link in path $p$ is decreased by $w_{p}$. The above procedure will be repeated until that for all $e$ we have $y_{i,e}^* = 0$. By the flow conservation constraint, we can state that

\begin{proposition}
By repeating the path extraction process on $G_i$, a state in which $\forall e \in G_i$, we have $y_{i,e}^* = 0$ will be reached.
\end{proposition}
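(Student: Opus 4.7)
The plan is to argue by a combination of an invariant (flow conservation is preserved) and a strict potential decrease (the number of positive-weight edges in $G_i$ drops by at least one per iteration). Let $y_{i,e}^{(k)}$ denote the residual weights after $k$ extraction steps, and let $F_k = 1 - \sum_{j \le k} w_{p_j}$ be the remaining $s_i$-$t_i$ net flow value. I would first establish the invariant that $\{y_{i,e}^{(k)}\}$ still satisfies the flow-conservation constraints for demand $i$ with total value $F_k$. This follows directly from the mechanics of the subtraction: on any internal vertex of the extracted path $p_j$, one incoming and one outgoing edge each drop by $w_{p_j}$, preserving balance; at $s_i$ and $t_i$ the net imbalance decreases by exactly $w_{p_j}$.

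Next I would argue the progress step. By the definition of the bottleneck, the edge of $p_j$ attaining the minimum weight $w_{p_j}$ is driven to exactly zero by the update, while no other edge weight increases. Hence at least one edge leaves $G_i$ per iteration, giving a termination bound of $|E(G_i)|$ iterations. Combined with the invariant, this reduces the proposition to the following sub-claim: as long as $G_i$ contains any edge with $y_{i,e}^{(k)} > 0$, a simple $s_i$-$t_i$ path can still be extracted. If $F_k > 0$, a direct BFS/DFS from $s_i$ along positive-weight edges, together with flow conservation, must reach $t_i$, so a simple $s_i$-$t_i$ path exists in $G_i$.

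The main obstacle is the corner case $F_k = 0$ with positive-weight edges remaining; by flow conservation, these would necessarily decompose into cycles, and the extraction rule as stated could get stuck. To rule this out, I would invoke the optimality of the convex-program solution: since $g(\cdot)$ is strictly increasing on positive inputs, any cycle of positive weight in $y^*$ could be eliminated by subtracting its minimum weight uniformly from all its edges, producing a feasible solution with strictly smaller objective value and contradicting optimality. Thus $y^*$, viewed as a flow for demand $i$, is acyclic, and when $F_k$ first reaches $0$ every remaining weight must be $0$ as well.

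As a safety net, in case the strict-monotonicity argument is delicate (for instance if one wishes to allow $g$ that is only non-decreasing, or if numerical solvers return a near-optimal solution with spurious cycles), I would prepend a polynomial-time cycle-cancellation preprocessing: while $G_i$ contains a cycle on which every weight is positive, subtract the cycle's minimum weight from each of its edges. Each such cancellation preserves flow conservation, does not increase the objective, and zeroes out at least one edge, so preprocessing terminates in $O(|E(G_i)|)$ steps and leaves an acyclic residual to which the argument above applies verbatim.
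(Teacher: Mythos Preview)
Your argument is correct and, in fact, considerably more careful than what the paper provides: the paper does not prove this proposition at all. It merely writes ``By the flow conservation constraint, we can state that \ldots'' and then asserts the proposition without further justification. So there is nothing to compare against at the level of technique.

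What you add that the paper omits is exactly the non-trivial part: the termination/cycle issue. The invariant (flow conservation is preserved after subtracting $w_{p_j}$ along $p_j$) and the potential decrease (the bottleneck edge is zeroed, so at most $|E(G_i)|$ iterations occur) are routine, and presumably what the paper has in mind. But the paper's one-line appeal to flow conservation glosses over the possibility that residual positive-weight edges form cycles disconnected from $s_i$ and $t_i$, in which case the path-extraction rule as stated cannot make progress. Your use of the strict monotonicity of $g$ on $(0,\infty)$ to argue that the optimal fractional solution $y^*$ is acyclic per commodity is the cleanest way to dispatch this, and the cycle-cancellation preprocessing you propose as a fallback is a sound and standard alternative that does not rely on exact optimality of the convex-program solver. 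Either route makes your proof complete where the paper's is only a gesture.
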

Consequently, we define this state as the termination of the path extraction process. As a result we get a collection of paths $\{p\}$ and for each of them we have a weight $w_p$. We randomly select one path from $\{p\}$ for current demand $i$ using the path weights as the selection probabilities. After this rounding, there will be only one single path for each demand. We denote this solution as $\hat{y_{i,e}} \in \{0,1\}$.

After the routes for the demands have been chosen, the state of every link must be determined. We select the speed of each link via the following rounding procedure. First, we compute the carried traffic $\hat{x_e}=\sum_i \hat{y_{i,e}}$, where $\hat{y_{i,e}}$ is the amount of demand $i$ that traverses link $e$ after the rounding. Then for each link, we search the collection of possible operational speeds and choose the minimal $s_e$ which can support the carried traffic. Formally we have

\begin{equation}
\label{eq:rounding-2}
s_e = \min \{ R_i | (i \in [1,m]) \wedge (\hat{x_e} \le R_i) \}.
\end{equation}

Determining the link states and routing all the traffic demands result in an approximated solution for the rate adaptive energy-efficient routing problem with discrete cost functions.

\section{Algorithm Analysis}
\label{sec:analysis}

We evaluate our method in this section. As it is designed to approximate the optimal, it is essential to derive an approximation ratio as the main accuracy criterion. Some accuracy loss has been introduced in both relaxation and rounding processes, so we study them one by one. First, we show a bound on the error introduced by the interpolation. Second, we bound the error by rounding. Then, by combining these results together, we draw our main conclusions. 

\subsection{Bound on the Interpolation Error}

Interpolation is proposed to approximate the objective function, so it is important to bound the error introduced. During the interpolation process, the error comes from the gap between the original function $f(x)$ and the fitted function $g(x)$. We define this gap as follows:

\begin{equation}
gap=\max_{x \in [1, R_m]} \left\lbrace \frac{f(x)}{g(x)}, \frac{g(x)}{f(x)} \right\rbrace.
\end{equation}
Here we also consider $x \in [1,R_m]$ as we did in previous sections. We will use this gap definition as the \emph{interpolation error}, reason for which we present the following theorem. Here we assume that $g(x)$ intersects with $f(x)$ in each step of $f(x)$, which is reasonable because we assumed that $g(x)$ could well describe $f(x)$ under the cube root rule. If not, there could be a big difference in trend between $f(x)$ and $g(x)$, and we cannot obtain any bounds. In other words, this would mean that the cube root rule does not apply to this network.

\begin{theorem}
\label{thm:error_bound}
Given function $f(x)$, if $f(x)$ satisfies $y_i/y_{i-1} \le \sigma$ where $\sigma > 1$, then in interval $[1, R_m]$, the interpolation error satisfies $gap \in [\frac{2\sigma}{\sigma + 1}, \varphi]$, where $\varphi = \max \left\{\sigma, f(1)/\mu\right\}$.
\end{theorem}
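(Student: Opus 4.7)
The gap is the worst-case logarithmic ratio between the step function $f$ and the continuous interpolation $g(x)=\mu x^{\beta}$ over $[1,R_m]$. My approach is to partition $[1,R_m]$ into the steps $[1,R_1],(R_1,R_2],\dots,(R_{m-1},R_m]$. On step $i$, $f\equiv y_i$ while $g$ is strictly monotone (using $\beta>0$, consistent with the cube-root assumption), and the intersection hypothesis guarantees some $x_i$ inside that step with $g(x_i)=y_i$. Consequently, on step $i$ the ratio $f/g$ is maximized at the left endpoint, $g/f$ at the right endpoint, and at least one of the two is $\le 1$ everywhere else; so the global gap reduces to the maximum of these endpoint ratios over $i$.

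\textbf{Upper bound.} At $x=1$ (leftmost point of step $1$) one obtains $f(1)/g(1)=y_1/\mu$, which is precisely the $f(1)/\mu$ contribution to $\varphi$. For any internal left endpoint $R_{i-1}$ with $i\ge 2$, the intersection in step $i-1$ gives $g(x_{i-1})=y_{i-1}$ with $x_{i-1}\le R_{i-1}$; monotonicity yields $g(R_{i-1})\ge y_{i-1}$, so $y_i/g(R_{i-1})\le y_i/y_{i-1}\le \sigma$. Symmetrically, for any internal right endpoint $R_i$ with $i\le m-1$, the intersection in step $i+1$ forces $g(R_i)\le y_{i+1}$, whence $g(R_i)/y_i\le y_{i+1}/y_i\le \sigma$. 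Taking the maximum over all endpoints then gives $\mathrm{gap}\le \max\{\sigma,\,f(1)/\mu\}=\varphi$.

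\textbf{Lower bound.} For the lower bound I focus on a single jump at $x=R_{i-1}$ with $y_i/y_{i-1}=\sigma$ and set $t:=g(R_{i-1})$; by continuity of $g$, the gap is at least $\max\{t/y_{i-1},\,y_i/t\}$. A case split on the arithmetic mean $(y_{i-1}+y_i)/2$ closes the argument: if $t\ge (y_{i-1}+y_i)/2$ then $t/y_{i-1}\ge (\sigma+1)/2$, and the AM--HM inequality gives $(\sigma+1)/2\ge 2\sigma/(\sigma+1)$; otherwise $t<(y_{i-1}+y_i)/2$ implies $y_i/t>2y_i/(y_{i-1}+y_i)=2\sigma/(\sigma+1)$ directly. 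Either way, $\mathrm{gap}\ge 2\sigma/(\sigma+1)$.

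\textbf{Main obstacle.} The delicate point is the right endpoint of the last step, $x=R_m$: there is no ``step $m{+}1$'' available to bound $g(R_m)/y_m$ via an adjacent intersection. I would close this case by one of two devices: either invoke an additional property of the least-squares log-linear fit (which tends to anchor the terminal intersection near $R_m$) to argue that $g(R_m)/y_m$ is absorbed by the same $\sigma$ bound, or conceptually extend $f$ by one virtual next step with ratio $\sigma$, which preserves all hypotheses and re-enables the right-endpoint argument. Either device keeps the upper bound $\varphi$ intact.
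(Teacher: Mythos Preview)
Your proposal is correct and follows essentially the same approach as the paper: for the lower bound you perform the same midpoint case split at a jump point $R_{i-1}$ (the paper's Cases~1 and~2), and for the upper bound you reduce to endpoint ratios controlled by the adjacent step ratio $y_i/y_{i-1}\le\sigma$, with the special contribution $f(1)/\mu$ at $x=1$. Your write-up is in fact tighter than the paper's: the paper's upper-bound paragraph is largely an assertion, whereas you make the adjacent-intersection argument explicit, and you honestly flag the boundary case $x=R_m$ that the paper leaves unaddressed; your two proposed fixes there are reasonable but, like the paper, ultimately informal. One small caveat shared by both arguments: the lower bound $2\sigma/(\sigma+1)$ really uses $\gamma:=\max_i y_i/y_{i-1}$ in place of $\sigma$ (the paper introduces $\gamma$ for exactly this reason), so strictly speaking the bound is $2\gamma/(\gamma+1)$ unless $\sigma$ is attained.
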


\begin{proof}
First we show the lower bound. 
%Since the gap is defined as the maximum ratio of the two functions $f(x)$ and $g(x)$, we consider two cases $f(x) \ge g(x)$ and $f(x) < g(x)$. 
Assuming $f(x)$ and $g(x)$ intersecting at each gap, and given the form of $g(\cdot)$, means that we can focus on values of $x \in \{1, R_1, R_2,..., R_m\}$. Hence we consider a particular point $x$ from the set $\{1, R_1, R_2,..., R_m\}$  (see Fig.~\ref{fig:error_bound}).
Assume that we can bound the gap and that we have a bound $\delta$. Let $\gamma = \max_i \{ y_i/y_{i-1} \}$.
%, we define $y_2 = y_i$ and $y_1 = y_{i-1}$ for which we get the maximum $y_i/y_{i-1}$.

Case 1: The fitted function $g(x)$ is above the midpoint of the two values of the step function $f(x)$ at $x$, as can be seen in Fig.~\ref{fig:error_bound:a}. Therefore, at $x$, the gap is defined as $gap=g(x)/f(x)$. Suppose we are given $\delta < \frac{2\gamma}{\gamma+1}$. Then for $\gamma>0$, we have
\begin{equation}
\delta < \frac{2\gamma}{\gamma+1} < \frac{\gamma+1}{2}.
\end{equation}
Then we can obtain
\begin{equation}
\gamma = \frac{y_{i}}{y_{i-1}} > (2\delta-1).
\end{equation}
Notice that $y_{1}$ equals $f(x)$ and here $g(x)$ is not smaller than $y_{i-1}+\frac{y_{i}-y_{i-1}}{2}$. We have
\begin{equation}
\begin{aligned}
	g(x) & \ge y_{i-1} + \frac{y_{i}-y_{i-1}}{2} \\
				   & \delta y_{i-1} = \delta f(x).
\end{aligned}
\end{equation}
This results in $g(x)>\delta f(x)$, which is a contradiction to the assumption that $g(x)/f(x) \le \delta$. So we have $\delta>\frac{2\gamma}{\gamma+1}$ in this case.

\begin{figure}[!t]
  \centering
  \subfigure[$g(x) \ge \frac{y_i + y_{i-1}}{2}$]{
    \label{fig:error_bound:a} %% label for first subfigure
    \includegraphics[width=2in]{./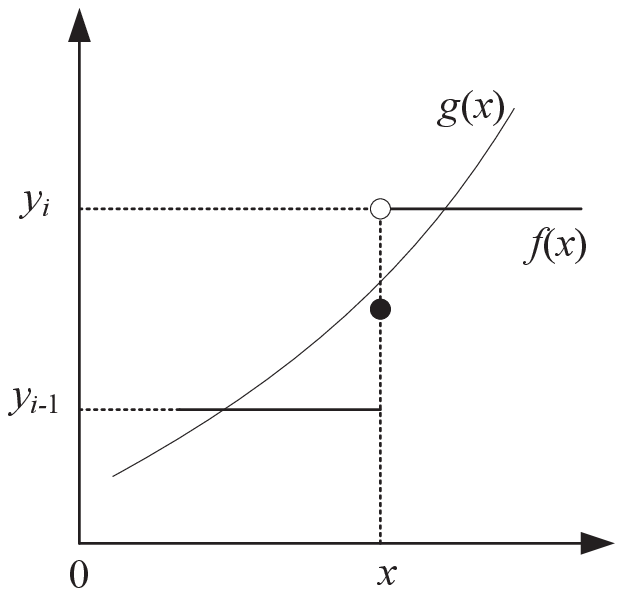}}
  \hspace{0.1in}
  \subfigure[$g(x) < \frac{y_i + y_{i-1}}{2}$]{
    \label{fig:error_bound:b} %% label for second subfigure
    \includegraphics[width=2in]{./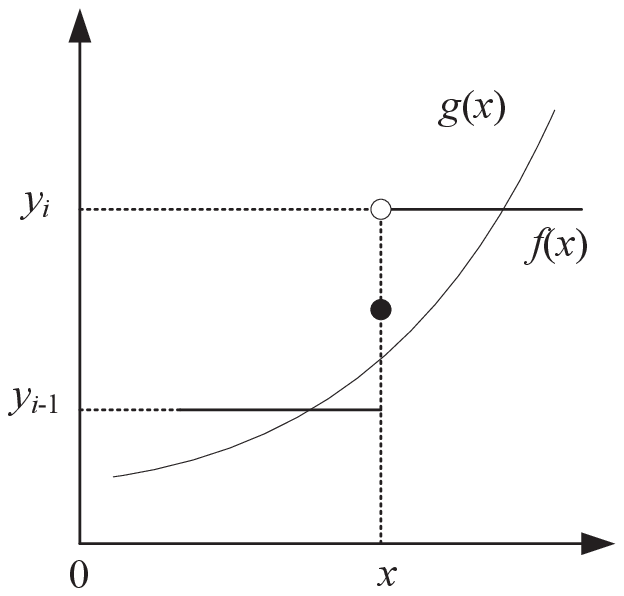}}
  \caption{Two cases of error bounding proof}
  \label{fig:error_bound} %% label for entire figure
\end{figure}

Case 2: The fitted function $g(x)$ is below the midpoint of the two values of $f(x)$ at $x' = x + \epsilon$ where $\epsilon$ is infinitesimal, as can be seen in Fig.~\ref{fig:error_bound:b}. Here the gap is defined as $gap=f(x')/g(x')$. Suppose we are given $\delta < \frac{2\gamma}{\gamma+1}$. We have

\begin{equation}
2y_{i} > \delta(y_{i-1}+y_{i}).
\end{equation}
Note that $y_{i}=f(x')$ and $g(x') \le y_{i-1} + \frac{y_{i}-y_{i-1}}{i}$. Then we can derive

\begin{equation}
\begin{aligned}
	g(x') & \le y_{i-1} + \frac{y_{i}-y_{i-1}}{2} \\
				   & \le \frac{y_{i}}{\delta} = \frac{i-1}{\delta}f(x').
\end{aligned}
\end{equation}
Similarly to the first case, the inequality $f(x') > \delta g(x')$ results in a contradiction to the assumption that we have a bound. So we also get $\delta>\frac{2\gamma}{\gamma+1}$ in this case.

As we have obtained the lower bound, now we discuss about the upper bound. It is straightforward that we cannot get a tighter upper bound rather than $\delta$ for the error at any point $x \in [R_1,R_2,...,R_m]$. For the extreme case where the fitted function $g(x)$ takes a value which is very close to $y_{i}$ at $x$, the gap can be represented by $gap=g(x)/f(x) \approx y_{i}/y_{i-1}$, as well as the error. However, when $x \in [1, R_1)$, this gap may be bounded by the case when $x = 1$, where $gap = f(1)/\mu$. Therefore, the interpolation error can be upper bounded by $\varphi =  \max \left\{\sigma, f(1)/\mu\right\}$.

Combining these results, we have that the interpolation error is in $\left[ \frac{2\gamma}{\gamma+1}, \varphi \right]$. This completes the proof.
\end{proof}

\begin{table}[!t]
	\caption{\label{tab:notation} Notation for Bounding the Approximation Ratio}
	%\scriptsize
	\centering
	\begin{tabular}{cl}
		\toprule
		Name & Definition \\
		\midrule
		$C_f$ & optimal integral solution under $f(\cdot)$\\
		\midrule
		$C_{g}$ & optimal integral solution under $g(\cdot)$ \\
		\midrule
		$C_{f}^*$ & \tabincell{l}{optimal fractional solution obtained by the program \\with a binary relaxation, which is $\sum_e f(x_e^*)$} \\
		\midrule
		$C_g^*$ & \tabincell{l}{optimal fractional solution under $g(\cdot)$, which is $\sum_e g(x_e^*)$} \\
		\midrule
		$\hat{C_g}$ & \tabincell{l}{solution obtained after the first rounding process, which is $\sum_e g(\hat{x_e})$} \\
		\midrule
		$\hat{C_f}$ & solution obtained by our overall method \\
		\bottomrule
	\end{tabular}
\end{table}

\subsection{Approximation Ratio Analysis}

In this section, we present our results for the approximation ratio. The notation that we are going to use are presented in Table~\ref{tab:notation}. Given this notation, our objective is to bound the ratio between $\hat{C_f}$ and $C^f$. 
We will assume that $\sigma$ and $\varphi$ are bounded by constants and will show that the ratio is then bounded by a constant.

\begin{lemma}[\cite{Andrews_Fernandez-SS-2010}]
\label{lm:bound_rounding-1}
For unit demands, randomized rounding on paths can obtain a $\lambda$-approximation for the energy-efficient routing problem such that $E[\hat{C_g}] \le \lambda E[C_g^*]$, where $\lambda$ is a constant.
\end{lemma}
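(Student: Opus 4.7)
The plan is to adapt the Raghavan--Thompson randomized rounding analysis to the polynomial objective $g(x)=\mu x^\beta$. The randomized rounding described in Section~\ref{sec:approx} decomposes, for each unit demand $i$, the fractional flow $y^*_{i,e}$ into a collection of $s_i$--$t_i$ paths $\{p\}$ with weights $w_p$ summing to $1$, and then selects exactly one path for demand $i$ with probability $w_p$. The first step I would establish is the marginal identity $E[\hat y_{i,e}]=y^*_{i,e}$ for every link $e$, which is immediate from the flow decomposition. By linearity, $E[\hat x_e]=x^*_e$, and because different demands are rounded using independent random draws, $\hat x_e=\sum_i \hat y_{i,e}$ is a sum of independent Bernoulli random variables.

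Given this, the core step is to prove a per-link bound of the form $E[g(\hat x_e)] \le \lambda \, g(x^*_e) + \text{(lower-order terms)}$. For $g(x)=\mu x^\beta$ with $\beta$ a constant, I would use a Chernoff--Hoeffding tail bound: for $t \ge 2x^*_e$, $P(\hat x_e \ge t)$ decays exponentially in $t-x^*_e$. Then
\begin{equation*}
E[\hat x_e^{\beta}] \;=\; \int_0^\infty \beta t^{\beta-1} P(\hat x_e \ge t)\,dt
\end{equation*}
splits into a ``near mean'' piece, bounded by $(2x^*_e)^\beta$, and a ``tail'' piece whose integral is $O(1)$ times $(x^*_e)^\beta$ because $\beta$ is constant and the Chernoff decay dominates polynomial growth. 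This yields $E[\hat x_e^\beta] \le \lambda (x^*_e)^\beta$ for a constant $\lambda = \lambda(\beta)$, provided $x^*_e$ is not too small.

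The main obstacle, as usual for Raghavan--Thompson with convex costs, is handling links where $x^*_e$ is very small (in particular $x^*_e < 1$): there the tail integral analysis above degenerates, since a single rounded demand already gives $\hat x_e=1$ and pays a constant cost $g(1)=\mu$, while $g(x^*_e)=\mu(x^*_e)^\beta$ can be arbitrarily small. To overcome this I would argue globally rather than per link. On each such ``light'' edge, $P(\hat x_e\ge 1)\le x^*_e$ by a union bound, so $E[g(\hat x_e)] \le \mu \cdot x^*_e \cdot \lambda'$ using the same Chernoff calculation conditioned on being nonzero. Summing over all light edges and using the fact that $\sum_e x^*_e$ is bounded by the total fractional routing cost (since $g(1)=\mu$ is a constant and each unit demand pays at least $\mu$ on at least one edge of any $s_i$--$t_i$ path in $G$), this aggregate contribution is absorbed into $O(1)\cdot C_g^*$.

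Finally, by linearity of expectation I would sum $E[g(\hat x_e)]$ over all edges, combine the ``heavy edge'' and ``light edge'' bounds, and obtain $E[\hat C_g] = \sum_e E[g(\hat x_e)] \le \lambda\, E[C_g^*]$ with $\lambda$ a constant depending only on $\beta$ and $\mu$. Because $\beta\in(1,3]$ and $\mu$ are treated as constants, $\lambda$ is indeed a constant. I expect the light-edge aggregation to be the most delicate part of the write-up; the Chernoff computation for the heavy edges, though tedious, is routine.
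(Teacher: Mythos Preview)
The paper does not prove this lemma; it simply cites \cite{Andrews_Fernandez-SS-2010}. Your heavy-edge Chernoff analysis is the standard one and is fine. The gap is exactly where you suspected: the light-edge aggregation.

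Your claim that $\sum_{e:\,x^*_e<1} x^*_e$ is $O(C_g^*/\mu)$ is false, and the justification you give (``each unit demand pays at least $\mu$ on at least one edge'') does not establish it. Consider a single unit demand with $n$ parallel $s$--$t$ edges. The convex relaxation with $g(x)=\mu x^\beta$ spreads the flow as $x^*_e=1/n$ on every edge, giving $C_g^*=\mu n^{1-\beta}$, while any integral routing has cost exactly $\mu$. Thus $E[\hat C_g]/C_g^*=n^{\beta-1}$, which is unbounded; no constant $\lambda$ can satisfy $E[\hat C_g]\le\lambda C_g^*$ against the \emph{naive} fractional optimum. In other words, the per-edge ratio blow-up you identified on light edges is not absorbed globally --- it is the integrality gap of the relaxation $(P_2)$ itself.

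What \cite{Andrews_Fernandez-SS-2010} actually does is change the relaxation: replace $g$ by a convex surrogate that is at least linear near zero, e.g.\ $\tilde g(x)=\mu\max(x,x^\beta)$, and solve the fractional program under $\tilde g$. Any \emph{integral} routing has $x_e\ge 1$ on used edges, so $\tilde g(x_e)=g(x_e)$ there, and hence the fractional optimum $\tilde C^*$ under $\tilde g$ is at most the integral optimum $C_g$. Randomized rounding against $\tilde C^*$ then goes through: on light edges $\tilde g(x^*_e)=\mu x^*_e$, which matches your $E[g(\hat x_e)]\approx\mu x^*_e$ bound, and the heavy-edge Chernoff argument is unchanged. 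The conclusion is $E[\hat C_g]\le\lambda\,\tilde C^*\le\lambda\,C_g$, which is what the rest of the paper actually uses. So your plan is salvageable, but only after you swap $g$ for $\tilde g$ in the relaxation and compare to $C_g$ rather than $C_g^*$.
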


Since $C_g^*$ is obtained by the optimal fractional solution and it is always the lower bound of the optimal integral solution $C_g$, we directly have that $E[\hat{C_g}] \le \lambda E[C_g^*] \le \lambda E[C_g]$.

\begin{lemma}
\label{lm:bound_rounding-2}
If the ratio between any two adjacent steps of cost function $f(\cdot)$ is bounded by $\sigma$, then $\hat{C_f}\le \varphi \sigma \hat{C_g}$.
\end{lemma}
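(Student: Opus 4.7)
The plan is to prove the lemma by establishing a per-edge bound of the form $f(s_e) \le \varphi\sigma\, g(\hat{x_e})$ and then summing over edges. The first ingredient is the flatness of the step function $f$: by Eq.~(\ref{eq:rounding-2}), $s_e = R_i$ where $i$ is the unique index with $R_{i-1} < \hat{x_e} \le R_i$, and since $f$ is constant on $(R_{i-1}, R_i]$, we obtain the key identity $f(s_e) = f(R_i) = f(\hat{x_e})$. Unused edges (those with $\hat{x_e} = 0$) contribute $0$ to both $\hat{C_f}$ and $\hat{C_g}$ and can be dropped.

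Given this identity, the interpolation bound from Theorem~\ref{thm:error_bound} applies directly: for every used edge, $\hat{x_e} \ge 1$ (the demands are integral and routing is unsplittable), so $\hat{x_e} \in [1, R_m]$ and hence $f(\hat{x_e}) \le \varphi\, g(\hat{x_e})$. Summing over edges yields $\hat{C_f} \le \varphi\, \hat{C_g} \le \varphi\sigma\, \hat{C_g}$ since $\sigma \ge 1$, which is the claim.

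If one prefers the extra $\sigma$ factor to arise organically rather than as slack, one can split on the index $i$. For $i \ge 2$, chain $f(s_e) = y_i \le \sigma y_{i-1}$ (step-ratio hypothesis), then $y_{i-1} = f(R_{i-1}) \le \varphi\, g(R_{i-1})$ (interpolation at $R_{i-1}$), and finally $g(R_{i-1}) \le g(\hat{x_e})$ (monotonicity of $g$ combined with $\hat{x_e} > R_{i-1}$), giving $f(s_e) \le \varphi\sigma\, g(\hat{x_e})$. For $i = 1$, $f(s_e) = f(1) \le \varphi\mu = \varphi\, g(1) \le \varphi\, g(\hat{x_e})$ via the $f(1)/\mu \le \varphi$ clause in the definition $\varphi = \max\{\sigma, f(1)/\mu\}$.

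I do not foresee a real obstacle: the proof is essentially a one-line consequence of Theorem~\ref{thm:error_bound} combined with the step-constancy of $f$. The only subtle point is the first step $[1, R_1]$, where the adjacent-step ratio $\sigma$ is unavailable and one must instead lean on the $f(1)/\mu$ component of $\varphi$ — which is precisely why the definition of $\varphi$ unifies both sources of slack into a single constant.
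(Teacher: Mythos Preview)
Your proof is correct and follows the same per-edge strategy as the paper: bound $f(s_e)$ in terms of $g(\hat{x_e})$ via Theorem~\ref{thm:error_bound} and sum over edges. Your key identity $f(s_e)=f(\hat{x_e})$ is in fact sharper than the paper's intermediate step $f(s_e)\le\sigma f(\hat{x_e})$, so your first argument actually yields the stronger bound $\hat{C_f}\le\varphi\,\hat{C_g}$ before relaxing by $\sigma\ge 1$; the paper introduces the $\sigma$ at the rate-rounding step rather than as slack, but the structure is otherwise identical.
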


\begin{proof}
The result follows from Theorem~\ref{thm:error_bound}, in which it shows that the largest gap between $g(\hat{x_e})$ and $f(\hat{x_e})$ is $\varphi$. Then, from the relation between $s_e$ and $\hat{x_e}$ (see Eq.~(\ref{eq:rounding-2})), we have that $f(s_e) \le \sigma f(\hat{x_e})$. By Theorem~\ref{thm:error_bound}, $f(s_e) \le \varphi \sigma g(\hat{x_e})$ follows. Thus, the following result can be derived
\begin{equation}
\hat{C_f} = \sum_e f(s_e) \le \varphi \sigma \sum_e g(\hat{x_e}) = \varphi \sigma \hat{C_g}.
\end{equation}
This completes the proof.
\end{proof}

\begin{theorem}
For unit demands, the expected energy consumption $\hat{C_f}$, is a $\rho$-approximation of $E[C_f]$. That is, $E[\hat{C_f}] \le \rho E[C_f]$, where $\rho$ is a constant that depends on $\sigma$ and $\varphi$.
\end{theorem}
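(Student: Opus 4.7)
The plan is to chain together three inequalities: the rounding error on the link-rate rounding (Lemma~\ref{lm:bound_rounding-2}), the randomized path rounding (Lemma~\ref{lm:bound_rounding-1}), and a new step that relates the fractional optimum under $g$ to the integral optimum under $f$ via the interpolation bound (Theorem~\ref{thm:error_bound}). Since $C_f$ is a deterministic quantity (the optimum of $(P_1)$), we have $E[C_f] = C_f$, and it will suffice to exhibit a constant $\rho$ depending only on $\sigma$ and $\varphi$ with $E[\hat{C_f}] \le \rho \cdot C_f$.

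First, I apply Lemma~\ref{lm:bound_rounding-2} in expectation to obtain $E[\hat{C_f}] \le \varphi\sigma \, E[\hat{C_g}]$. Next, Lemma~\ref{lm:bound_rounding-1} on the Raghavan--Thompson randomized path rounding gives $E[\hat{C_g}] \le \lambda\, E[C_g^*]$ for some constant $\lambda$. Composing the two yields
\begin{equation*}
E[\hat{C_f}] \le \varphi\sigma\lambda \, E[C_g^*].
\end{equation*}

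The key remaining step, which is the crux of the argument, is to bound $C_g^*$ in terms of $C_f$. I would argue as follows. Take the optimal integral solution of $(P_1)$, with link loads $x_e^{\mathrm{opt}}$ and cost $C_f = \sum_e f(x_e^{\mathrm{opt}})$. This integral solution is also a feasible point of the relaxed convex program whose optimum is $C_g^*$, and evaluating the convex objective at this point gives $\sum_e g(x_e^{\mathrm{opt}})$. By the upper half of Theorem~\ref{thm:error_bound}, $g(x) \le \varphi f(x)$ for every $x \in [1, R_m]$, so
\begin{equation*}
C_g^* \le \sum_e g(x_e^{\mathrm{opt}}) \le \varphi \sum_e f(x_e^{\mathrm{opt}}) = \varphi \, C_f.
\end{equation*}
Combining all three bounds gives $E[\hat{C_f}] \le \varphi^2 \sigma \lambda \, C_f$, so setting $\rho := \varphi^2 \sigma \lambda$ completes the proof.

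The main obstacle is the step $C_g^* \le \varphi C_f$: one must be careful that the integral $(P_1)$-optimum is indeed feasible for the $g$-relaxation (identical flow-conservation constraints, the binary $y_{i,e}$ being a special case of $y_{i,e}\in[0,1]$, and the discarded $z_e$ variables posing no issue since $g$ is evaluated directly at $x_e$). Once feasibility is checked, the interpolation gap bound does the rest. Everything else is just assembling constants, so $\rho$ comes out as an explicit function of $\sigma$, $\varphi$, and the rounding constant $\lambda$.
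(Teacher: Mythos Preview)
Your proof is correct and follows essentially the same route as the paper: chain Lemma~\ref{lm:bound_rounding-2}, Lemma~\ref{lm:bound_rounding-1}, and the interpolation bound of Theorem~\ref{thm:error_bound} to obtain $E[\hat{C_f}] \le \varphi^2\sigma\lambda\, C_f$. The only cosmetic difference is that the paper inserts the fractional $f$-optimum $C_f^*$ as an intermediate step ($C_g^* \le \varphi C_f^* \le \varphi C_f$), whereas you bound $C_g^*$ directly against $C_f$ by feeding the integral $(P_1)$-optimum into the $g$-relaxation; both arguments are equivalent and yield the same constant.
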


\begin{proof}
The expected energy consumption of the solution found is $E[\hat{C_f}] = E[\sum_e f(s_e)]$. From Lemma~\ref{lm:bound_rounding-2} we have that $\hat{C_f} \le \varphi \sigma \hat{C_g}$ and, hence, $E[\hat{C_f}] \le \varphi \sigma E[\hat{C_g}]$. As it is shown in Lemma~\ref{lm:bound_rounding-1}, there is a constant $\lambda$ such that $E[\hat{C_g}] \le \lambda E[C_g^*]$.

To complete the proof, we observe from Theorem~\ref{thm:error_bound} that, for all $x$, $g(x)/\varphi \le f(x)$. Then the optimal fractional solution under $g(x)$ satisfies $C_g^*/\varphi \le C_f^*$. Given that the optimal fractional solution under $f(x)$ is always a lower bound of the optimal solution $C_f$, we also have that $C_g^* \le \varphi C_f^* \le \varphi C_f$. 
The last inequality comes from the fact that the optimal fractional solution is always a lower bound for the optimal solution. 
Combining all these results together, we can obtain that 
\begin{equation}
E[\hat{C_f}] \le \varphi \sigma E[\hat{C_g}] \le \varphi \sigma \lambda E[C_g^*] \le \varphi^2 \sigma \lambda E[C_f].
\end{equation}

Assume that $\rho = \varphi^2 \sigma \lambda$. As $\alpha$, $\varphi$ and $\lambda$ are constants, we conclude that $\rho$ will also be a constant. This completes the proof.
\end{proof}

This result can be directly applied to uniform demands, where each traffic demand requests a bandwidth $d_i = d$. In this case the total flow on each edge will be $d$ times that of the case with unit demands. This is the only difference between these two cases.

\begin{corollary}
A $\rho$-approximation to the optimal integral solution in expectation can be obtained for uniform demands.
\end{corollary}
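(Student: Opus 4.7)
The plan is to show that the argument for unit demands carries over essentially verbatim to uniform demands, by leveraging the structural observation that making every demand of size $d$ simply multiplies the load on every edge by $d$ compared with treating the same routing as a unit-demand routing. Since the chosen paths do not depend on whether a demand ``counts'' as $1$ or as $d$, the combinatorial routing problem is unchanged; only the quantities going into the cost function are rescaled.

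Concretely, I would re-run the same two-step relaxation-and-rounding algorithm on the uniform-demand instance: interpolate $f$ by $g$, solve the convex relaxation, randomly round to a single path per demand, and then set each link's state to the smallest $R_i$ supporting $\hat{x_e}$. I would then verify that each ingredient of the preceding proof survives. The interpolation error bound of Theorem~\ref{thm:error_bound} depends only on the shape of $f(\cdot)$ (through $\sigma$ and $\varphi$) and not on demand values, so the pointwise inequality $f(s_e) \le \varphi\sigma\, g(\hat{x_e})$ of Lemma~\ref{lm:bound_rounding-2}, and hence $\hat{C_f} \le \varphi\sigma\, \hat{C_g}$, continue to hold unchanged. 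Likewise the lower-bound chain $C_g^* \le \varphi C_f^* \le \varphi C_f$ uses only the pointwise bound $g(x)/\varphi \le f(x)$, which is oblivious to demand size.

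The one step that deserves a second look is Lemma~\ref{lm:bound_rounding-1}, whose constant $\lambda$ was imported from \cite{Andrews_Fernandez-SS-2010} for unit demands. Here I would argue that uniform demands fit the same framework: because $g(x)=\mu x^{\beta}$ is homogeneous and every flow contributes the same factor $d$ to the edges it traverses, scaling all demand sizes by $d$ multiplies both $\hat{C_g}$ and $C_g^*$ in the underlying Chernoff-type concentration argument by the same $d^{\beta}$, leaving the ratio $E[\hat{C_g}]/E[C_g^*]$ unchanged. This is the main place I would be careful, as the cited randomized-rounding bound needs to be re-read with $d$-unit flows in mind rather than unit flows.

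Chaining these three facts exactly as in the preceding theorem then yields
\[
E[\hat{C_f}] \;\le\; \varphi\sigma\, E[\hat{C_g}] \;\le\; \varphi\sigma\lambda\, E[C_g^*] \;\le\; \varphi^{2}\sigma\lambda\, E[C_f],
\]
so the same constant $\rho=\varphi^{2}\sigma\lambda$ obtained for unit demands also serves as the approximation ratio for uniform demands, as claimed.
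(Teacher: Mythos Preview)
Your argument is correct and follows the same approach as the paper, which simply observes (in the paragraph preceding the corollary) that uniform demands of size $d$ scale every edge load by a factor of $d$ compared to the unit-demand case, so the analysis carries over unchanged. Your verification is more careful---particularly the homogeneity remark that both $E[\hat{C_g}]$ and $E[C_g^*]$ pick up the same factor $d^{\beta}$---but the underlying idea is identical to the paper's one-line justification.
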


For non-uniform demands, Andrews et al. \cite{Andrews_Fernandez-SS-2010} has derived an approximation result on the randomized rounding on paths. We borrow this result and combine it with Theorem~\ref{thm:error_bound} and Lemma~\ref{lm:bound_rounding-2}. Then the following theorem  holds.

\begin{theorem}
\label{thm:approx_ratio_nonuniform}
For non-uniform demands, we can achieve a $O(log^{\beta-1}d)$-approximation for the rate adaptive energy-efficient routing problem, where $d$ is the maximum traffic demand in $D$.
\end{theorem}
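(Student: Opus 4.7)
The plan is to follow the same three-layer decomposition used for the unit-demand theorem, replacing only the middle layer (the randomized path rounding bound) with the non-uniform analogue due to Andrews et al.\ \cite{Andrews_Fernandez-SS-2010}. The interpolation-error bound from Theorem~\ref{thm:error_bound} and the speed-rounding bound from Lemma~\ref{lm:bound_rounding-2} are both independent of the demand sizes, so they contribute only the constant factor $\varphi^2\sigma$; all of the $d$-dependence is concentrated in the path rounding step, where the degree $\beta$ of the interpolated polynomial $g(x)=\mu x^{\beta}$ enters.

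First I would invoke the randomized rounding result for superadditive polynomial cost functions of degree $\beta$ under non-uniform demands: applied to the fractional optimum $y^*_{i,e}$ of the convex relaxation under $g(\cdot)$, it yields a rounded integral routing $\hat{y}_{i,e}$ with
\begin{equation*}
E[\hat C_g] \;\le\; O(\log^{\beta-1} d)\cdot C_g^{*}.
\end{equation*}
This is the only step whose approximation factor depends on $d$, and it is exactly where the non-uniform case departs from the unit-demand analysis (which used the constant $\lambda$ from Lemma~\ref{lm:bound_rounding-1}).

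Second, I would push this bound through the two remaining constant-factor losses exactly as in the unit-demand proof. By Lemma~\ref{lm:bound_rounding-2}, the second rounding step (choosing each link's discrete speed $s_e$ via Eq.~(\ref{eq:rounding-2})) inflates the cost by at most $\varphi\sigma$, so $\hat C_f \le \varphi\sigma\,\hat C_g$. By Theorem~\ref{thm:error_bound}, for every $x$ we have $g(x)/\varphi\le f(x)$, which at the fractional optima gives $C_g^{*}\le \varphi\,C_f^{*}\le \varphi\,C_f$, the last inequality holding because a fractional optimum lower-bounds the integral optimum. Chaining the three estimates yields
\begin{equation*}
E[\hat C_f] \;\le\; \varphi\sigma\, E[\hat C_g] \;\le\; O(\log^{\beta-1} d)\cdot\varphi\sigma\, C_g^{*} \;\le\; O(\log^{\beta-1} d)\cdot\varphi^{2}\sigma\, C_f,
\end{equation*}
and since $\varphi$ and $\sigma$ are constants the overall ratio is $O(\log^{\beta-1} d)$, as claimed.

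The only real obstacle is verifying that the non-uniform randomized-rounding guarantee cited above is applicable here. Its hypotheses are a convex, polynomial-of-degree-$\beta$ link cost and a fractional multicommodity solution obeying flow conservation; both hold for $g(x)=\mu x^{\beta}$ and the solution $y^*_{i,e}$ of our relaxed convex program, so the invocation is direct. The remaining algebra is just the chaining above; no new inequalities beyond Theorem~\ref{thm:error_bound} and Lemma~\ref{lm:bound_rounding-2} are required.
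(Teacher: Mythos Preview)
Your proposal is correct and follows essentially the same approach as the paper: the paper itself gives no explicit proof but only the one-line sketch ``borrow the non-uniform randomized-rounding result of Andrews et al.\ and combine it with Theorem~\ref{thm:error_bound} and Lemma~\ref{lm:bound_rounding-2},'' and your argument carries out precisely that combination by re-running the unit-demand chain $E[\hat C_f]\le\varphi\sigma E[\hat C_g]\le\cdots\le\varphi^2\sigma\cdot(\text{rounding factor})\cdot E[C_f]$ with $\lambda$ replaced by $O(\log^{\beta-1}d)$.
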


\section{Conclusion and Future Work}
\label{sec:con}

In this paper, we investigate the energy-efficient network routing problem with discrete cost functions. Our contributions are mainly on the following results. For the rate adaptive energy-saving problem, we formulate it as an integer program and prove that the general case of this problem is NP-hard, and even hard to approximate. For a special but practical case of this problem, we provide an efficient approximation algorithm which is based on a two-step relaxation and rounding process. We show by analysis that the error incurred in each step of our algorithm can be bounded. By using our algorithm we obtain a constant approximation to the optimal for unit and uniform demands and an $O(log^{\beta-1}d)$-approximation for non-uniform demands. 

Observing that the network traffic comes from the end devices connected to the network, we have that it is important to combine the optimization of placing end devices with the network routing optimization. For instance in a datacenter environment, a good placing of virtual machines can significantly reduce the network congestion as well as the energy consumption of network devices. As a natural extension to our routing problem, this new problem can be described by an additional description to the original model saying that we can also determine the sources and destinations of all the demands as well as the routing paths. Actually this turns out to be a combination of a quadratic assignment problem and a network routing problem. In general, this problem is much harder than the one we discussed in this paper. To find effective ways to approximate it is still an open problem.

\section*{Acknowledgement}

This research was supported in part by the National Natural Science Foundation of China grant 61020106002, 61161160566 and 61202059, and the Comunidad de Madrid grant S2009TIC-1692, Spanish MICINN grant TEC2011-29688-C02-01.

%% The Appendices part is started with the command \appendix;
%% appendix sections are then done as normal sections
%% \appendix

%% \section{}
%% \label{}

%% References
%%
%% Following citation commands can be used in the body text:
%% Usage of \cite is as follows:
%%   \cite{key}         ==>>  [#]
%%   \cite[chap. 2]{key} ==>> [#, chap. 2]
%%

%% References with bibTeX database:

\bibliographystyle{elsarticle-num}
\bibliography{ref.bib}

%% Authors are advised to submit their bibtex database files. They are
%% requested to list a bibtex style file in the manuscript if they do
%% not want to use elsarticle-num.bst.

%% References without bibTeX database:

% \begin{thebibliography}{00}

%% \bibitem must have the following form:
%%   \bibitem{key}...
%%

% \bibitem{}

% \end{thebibliography}

\end{document}